\documentclass[10pt,a4paper]{article}
\usepackage[T1]{fontenc}
\usepackage{url}

\usepackage[linesnumbered,ruled,vlined,noresetcount]{algorithm2e}
\DontPrintSemicolon
\SetArgSty{textnormal}
\SetKwProg{Fn}{function}{:}{end}
\SetKwInput{Colors}{Colors}
\SetKwInput{Notation}{Notation}
\SetKwInput{RuleFunction}{Rule Function}

\usepackage{multirow}
\usepackage{color}

\usepackage{authblk}
\usepackage{geometry}
\geometry{left=1in,right=1in,top=1.5in,bottom=1.5in}
\usepackage{amsmath,amssymb,amsthm}
\usepackage{graphicx}

\newtheorem{lemma}{Lemma}
\newtheorem{theorem}{Theorem}

\theoremstyle{definition}

\newtheorem{definition}{Definition}

\newcommand{\figcap}{Figure~}


\newcommand{\bname}{Böckenhauer}


\newcommand{\gint}{\gamma_{\mathrm{init}}}

\newcommand{\ie}{{\em i.e.},~}


\newcommand{\protocol}[1]{\textsc{#1}}

\newcommand{\colorName}{\mathsf}
\newcommand{\cneigh}{\colorName{neigh}}
\newcommand{\cpath}{\colorName{path}}
\newcommand{\cinit}{\colorName{init}}
\newcommand{\cfin}{\colorName{fin}}
\newcommand{\cheadd}{\colorName{head}}
\newcommand{\cheadone}{\colorName{head1}}
\newcommand{\cheadtwo}{\colorName{head2}}
\newcommand{\cstay}{\colorName{stay}}
\newcommand{\cstop}{\colorName{stop}}


\newcommand{\tr}{\mathbf{true}}
\newcommand{\fl}{\mathbf{false}}

\newcommand{\mif}{\mathbf{if~}}
\newcommand{\melif}{\mathbf{else~if~}}
\newcommand{\moth}{\mathbf{otherwise}}

\newcommand{\ssix}{\protocol{Gen6}}
\newcommand{\asix}{\calA_{\ssix}}
\newcommand{\csix}{C_{\ssix}} 
\newcommand{\xisix}{\xi_{\ssix}} 
\newcommand{\sfive}{\protocol{PF5}}
\newcommand{\afive}{\calA_{\sfive}}
\newcommand{\cfive}{C_{\sfive}} 
\newcommand{\xifive}{\xi_{\sfive}}
\newcommand{\stf}{\protocol{TF5}}
\newcommand{\atf}{\calA_{\stf}}
\newcommand{\ctf}{C_{\stf}}
\newcommand{\xitf}{\xi_{\stf}}

\newcommand{\upf}{N(u_k)\setminus (F \cup \{u_1,u_2,\dots,u_k\}\cup \bigcup_{i=1}^{k-1} N(u_i))}
\newcommand{\upfg}[1]{N(u_k)\setminus (F(#1) \cup \{u_1,u_2,\dots,u_k\}\cup \bigcup_{i=1}^{k-1} N(u_i))}


\newcommand{\calA}{\mathcal{A}}

\newcommand{\calM}{\mathcal{M}}
\newcommand{\calE}{\mathcal{E}}

\newcommand{\rlabel}[1]{\textbf{Rule #1}}
\newcommand{\rlabels}[1]{\textbf{Rules #1}}

\newcommand{\vin}{v_{\mathrm{in}}}
\newcommand{\vout}{v_{\mathrm{out}}}

\title{Recolorable Graph Exploration by an Oblivious Agent with Fewer Colors}
\date{}

\author[1]{Shota Takahashi}
\author[2]{Haruki Kanaya}
\author[3]{Shoma Hiraoka}
\author[2]{Ryota Eguchi}
\author[1]{Yuichi Sudo\thanks{Corresponding Author: sudo@hosei.ac.jp}}

\affil[1]{Hosei University, Tokyo, Japan}
\affil[2]{Nara Institute of Science and Technology, Nara, Japan}
\affil[3]{Osaka University, Osaka, Japan}

\begin{document}
\maketitle
\begin{abstract}
Recently, Böckenhauer, Frei, Unger, and Wehner (SIROCCO 2023) introduced a novel variant of the graph exploration problem in which a single memoryless agent must visit all nodes of an unknown, undirected, and connected graph before returning to its starting node. Unlike the standard model for mobile agents, edges are not labeled with port numbers. Instead, the agent can color its current node and observe the color of each neighboring node. To move, it specifies a target color and then moves to an adversarially chosen neighbor of that color. Böckenhauer~et al.~analyzed the minimum number of colors required for successful exploration and proposed an elegant algorithm that enables the agent to explore an arbitrary graph using only eight colors. In this paper, we present a novel graph exploration algorithm that requires only six colors. Furthermore, we prove that five colors are sufficient if we consider only a restricted class of graphs, which we call the $\varphi$-free graphs, a class that includes every graph with maximum degree at most three and every cactus.
\end{abstract}


\section{Introduction}
\label{sec:introduction}

Autonomous mobile entities that migrate from node to node on a graph---commonly referred to as \emph{mobile agents} or simply \emph{agents}---have been extensively studied in the theoretical distributed computing community. Numerous fundamental problems involving these agents have been investigated in the literature, including graph exploration~\cite{PDD+96,PP99,Reingold08,SBN+15,SOK+20,BGKP22}, gathering~\cite{OKK+13,SNO+19}
(called rendezvous if exactly two agents are considered), dispersion~\cite{KA19,KMS18,KMS19,AM18,SSN+24}, uniform deployment \cite{SMO+16,SSNK20},  black hole search~\cite{DFPS06,DFPS07,DSS+08}, and others.

Among these, graph exploration (performed by a single agent) is arguably one of the most fundamental problems, since exploration algorithms often serve as building blocks for solving many other problems. The objective is to enable the agent to visit every node in a graph, return to its starting node, and then terminate. This problem has been studied under various settings: the nodes may be \emph{labeled} or \emph{anonymous}; the agent may have access to the local memories of the nodes (often referred to as \emph{whiteboards}); the agent may have its own memory or be \emph{oblivious}; and the initial configuration may be fixed or arbitrary (in which case the solution is self-stabilizing), among others.

All these settings for graph exploration share a common characteristic: the edges are locally labeled with \emph{port numbers}. Typically, in a graph $G=(V,E)$, the edges incident to a node $v\in V$ are assigned distinct port numbers $0,1,\dots,\delta_v-1$, where $\delta_v$ is the degree of $v$. The agent uses these port numbers to distinguish between the edges and decide which neighbor to move to. Moreover, most studies in the literature assume that the agent can access the \emph{incoming port} information: when the agent moves from a node $u$ to $v$, it learns the port number assigned to the edge $\{u,v\}$ at the destination $v$, whereas the well-known rotor-router algorithm \cite{PDD+96} and some other algorithms \cite{BGKP22} do not use this information.

In 2023, Böckenhauer, Frei, Unger, and Wehner~\cite{BFUW23} introduced a surprisingly novel model for mobile agents in which \emph{port numbers are unavailable}. Moreover, they assume that the nodes are anonymous and that the agent is oblivious, so additional assumptions are required to make graph exploration feasible. To overcome these limitations, they introduced \emph{node coloring}. Specifically, at each time step, the agent at a node $v$ can observe the color of $v$ as well as the colors of its neighboring nodes. Based on this information, the agent updates the color of $v$ and computes a \emph{destination color} $c$, after which it moves to a neighbor with color $c$. If two or more neighbors of $v$ are colored $c$, the destination is selected adversarially among them.
At the start of exploration, all nodes are assigned a common initial color.

\bname~et al.~thoroughly investigated the minimum number of colors required to solve this new variant of the graph exploration problem. It is worth mentioning that they measure color complexity by excluding the initial color. In contrast, we include the initial color in our count---a convention we believe improves clarity. Consequently, whenever we refer to results from~\cite{BFUW23}, our reported color counts are always one greater.

If the agent is restricted to changing the color of each node at most once, they prove that $n$ colors are both necessary and sufficient for arbitrary simple, undirected, and connected graphs, while 4 colors are necessary and sufficient for trees. (In the rest of this paper, we assume that the graph is simple, undirected, and connected, so this assumption is omitted in our subsequent descriptions.) They further show that for bipartite graphs, $n-1$ colors are necessary and $n$ colors are sufficient, and that when the exact network size $n=|V|$ is known, $n-2$ and $n-1$ colors are necessary and sufficient, respectively, for arbitrary graphs.

In the model where the agent is allowed to change the color of each node two or more times (i.e., with no limitation on the number of recolorings), which we refer to as the recolorable model, the color complexity decreases significantly. Specifically, they prove that eight colors are sufficient to explore arbitrary graphs, although no non-trivial lower bound is provided.

To be candid, we could find no immediate practical application for this exploration setting. However, we believe that this new variant holds significant theoretical interest. In the absence of port numbers, most of the commonly used techniques for graph exploration and related problems cannot be applied. Consequently, entirely novel methods are required to minimize the color complexity. In this regard, the new model introduced by \bname~et al.~\cite{BFUW23} may pave the way for exciting new avenues of research in the field of mobile agents.

\paragraph{Our Contribution}
In this paper, we address the aforementioned variant of graph exploration introduced by \bname~et al.~\cite{BFUW23}. In particular, within the recolorable model, we present an exploration algorithm for arbitrary graphs that uses only six colors, improving on the eight-color bound of \bname~et al.~by two. Moreover, for triangle-free graphs---those containing no cycle of length 3---this algorithm can be adapted to use only five colors. We also prove that five colors suffice for another restricted class of graphs, which we call \emph{$\varphi$-free} graphs (see Section \ref{sec:afive}); this class includes all graphs of maximum degree 3 and all cactus graphs.

Similarly to \bname~et al.~\cite{BFUW23}, we do not provide any non-trivial lower bound for exploration on arbitrary graphs in the recolorable model. Therefore, the main open question is whether six colors are indeed necessary or if this number can be further reduced. Another promising direction for future research is to allow the agent to have its own internal memory. Even a constant number of bits---or even just one bit---might significantly help to reduce the required number of colors.

This paper adheres strictly to the original model of \bname~et al.~\cite{BFUW23}, but there is one ambiguity---apparently due to a typo in their definition---concerning whether the agent may recolor a node from a non‐initial color back to the initial color.
In this paper, we assume that it can, and we conjecture that \bname~et al.\ intended the same (see Section~\ref{sec:color_count} for details). If this ability were disallowed, the color complexity of our first two algorithms---for arbitrary graphs and for triangle‐free graphs---would each increase by one (to seven and six, respectively). In contrast, our third algorithm for $\varphi$-free graphs would remain unchanged, since it never relies on recoloring back to the initial color. We further conjecture that even a substantial modification of the eight-color algorithm of \bname~et al.\ would not benefit from this capability.

\section{Preliminaries}
\label{sec:pre}
Throughout this paper, the term \emph{graph} refers exclusively to a simple, undirected, and connected graph.

An algorithm is defined as a triple $\calA = (C, c_0, \xi)$, where $C$ is a set of colors, $c_0 \in C$ is the initial color, and
$$
\xi: C \times \calM(C) \to C \times (C \cup \{\cstay,\cstop\})
$$
is a rule function. Here, $\calM(C)$ denotes the set of all multisets over $C$, that is, all functions $m: C \to \mathbb{N}_0$, where $\mathbb{N}_0$ is the set of nonnegative integers.
When an agent executes $\calA$ on a graph $G=(V,E)$, the execution proceeds as follows. Let $v_t$ be the node where the agent is located at time $t=0,1,2,\dots$, and let $c(v,t)$ denote the color of the node $v\in V$ at time $t$. Initially, every node is assigned the color $c_0$, \ie $c(v,0)=c_0$ for all $v\in V$.
At each time step $t$, the agent observes the color of its current node $v_t$ and the colors of all its neighbors. More precisely, it obtains the current color $c(v_t,t)$ and the multiset $M=\bigsqcup_{u\in N(v_t)}\{c(u,t)\}$.
Here, the symbol $\sqcup$ denotes the sum (union) of multisets, and $N(v)$ represents the set of neighbors of a node $v\in V$. The agent then recolors $v_t$ with a new color $x$ and moves to a neighbor with color $y$, where
$$
(x,y)=\xi(c(v_t,t),M).
$$
Note that if more than one neighbor in $N(v_t)$ has color $y$, the agent moves to one of those nodes, with the choice made \emph{adversarially}. We have two exceptions:
\begin{itemize}
\item The agent stays at the current node $v_t$ if $y=\cstay$\footnotemark{}, and
\item The agent terminates its execution at $v_t$ if $y=\cstop$ or $y\notin M$ holds. 
\end{itemize}
\footnotetext{
In \cite{BFUW23}, the stay option $\cstay$ is not included. Note that introducing the stay option does not change or extend the model: if $\xi(c,M)=(x,\cstay)$, then we can equivalently define the rule function recursively by setting $\xi(c,M)=\xi(x,M)$, as in the original model.
}
Even in these exceptional cases, the agent recolors $v_t$ with $x$.

A global state, or \emph{configuration}, is a triple $\gamma=(v,\psi,b)$, where $v\in V$ is the current node where the agent is located, $\psi:V\to C$ assigns a color to every node, and $b\in\{\fl,\tr\}$ indicates whether the algorithm has terminated. We write $\gamma\to\gamma'$ to denote that configuration $\gamma$ transitions to $\gamma'$ by one action of the agent (i.e., by updating the color and executing a movement or termination).
We denote by $\gint(v)$ the configuration in which all nodes are colored $c_0$, the agent is located at node $v$, and the agent has not terminated. A finite or infinite sequence $\calE=\gamma_0,\gamma_1,\dots$ of configurations is called an \emph{execution} of an algorithm $\calA$ starting at node $v$ if $\gamma_0=\gint(v)$ and $\gamma_t\to\gamma_{t+1}$ holds for every consecutive pair of configurations. We say that an execution $\calE$ is \emph{maximal} if it is infinite or if the agent has terminated in its last configuration.

\begin{definition}[Graph Exploration]
An algorithm $\calA$ is said to solve \emph{exploration} on a graph $G=(V,E)$ if, for any node $s \in V$,
the agent visits every node in $V$ and terminates at $s$ in every maximal execution of $\calA$ starting at $s$.
\end{definition}

Our goal is to minimize the number of colors used in graph exploration. For an algorithm $\calA = (C, c_0, \xi)$, we define the number of colors to be $|C|$. Note that in the original work by \bname~et al.~\cite{BFUW23}, the number of colors is defined as $|C| - 1$, since the initial color $c_0$ is not counted. In this paper, we count the initial color $c_0$, as this definition may be clearer for most readers.

\section{Exploration on Arbitrary Graphs}
\label{sec:asix}
In this section, we show that six colors suffice for exploring arbitrary graphs. \bname~et al.~\cite{BFUW23} presented a simple and elegant algorithm for graph exploration that employs eight colors. In their algorithm, the agent visits all nodes in a breadth-first search (BFS) manner and terminates at the starting node $s$. To reduce the number of colors, we adopt a different approach in which the agent explores the graph in a manner reminiscent of depth-first search (DFS), though its movement deviates slightly. We refer to this approach as the \emph{semi-DFS}. In the remainder of this section, we first explain the semi-DFS and then show how the agent can simulate it using only six colors.

\subsection{Semi-DFS}
A depth-first search (DFS) can be described in terms of a maintained path $P$ and a set $F$ of \emph{finished} nodes, as follows:
\begin{enumerate}
\item Initially, set $P = (s)$ and $F = \emptyset$, where $s$ is the starting node.
\item Let $P = (u_1, u_2, \dots, u_k)$. If there exists a neighbor $u \in N(u_k) \setminus F$, choose any such node and append it to $P$. Otherwise, remove the last node (or \emph{head}) $u_k$ from $P$ and add it to $F$.
\item If $P$ is empty (\ie $P=()$), terminate; otherwise, return to step (2).
\end{enumerate}
Generally speaking, during a DFS, two nodes $u_i$ and $u_j$ in the path $P=(u_1,u_2,\dots,u_k)$ may be adjacent even if $|i-j|\neq 1$. In particular, the head $u_k$ may have two or more neighbors that belong to $P$. When the agent attempts to simulate DFS using a constant number of colors, this fact poses a significant challenge. To backtrack from $u_k$ to $u_{k-1}$, the agent must distinguish $u_{k-1}$ from the other neighbors of $u_k$. If we were allowed to assign different colors to the $k-1$ nodes $u_1, u_2, \dots, u_{k-1}$, the agent could easily identify $u_{k-1}$. However, this approach would require $\Omega(|V|)$ colors generally. Thus, there appears to be no straightforward way to implement DFS with a constant number of colors.

\begin{algorithm}[t]
\caption{Semi-DFS}
\label{al:semi-dfs}
$P \gets (s)$,\ $F \gets \emptyset$\tcp*{$s$ is the starting node}
\While{$P$ is not empty (\ie $P\neq ()$)}{
Let $(u_1,u_2,\dots,u_k)=P$\;
Let $U(P,F)=\upf$.\;
\uIf{$U \neq \emptyset$}{
Choose an arbitrary node $u \in U$.\; 
$P \gets (u_1,u_2,\dots,u_k,u)$\tcp*{expand path $P$}
}
\Else{
$P \gets (u_1,u_2,\dots,u_{k-1})$\tcp*{remove $u_k$ from path $P$}
$F \gets F \cup \{u_k\}$\;
}
}
\end{algorithm}

\begin{algorithm}[t]
\caption{Graph Exploration Algorithm for General Graphs $\asix=(\csix,\cinit,\xisix)$}
\label{al:asix}
\Colors{
$\csix=\{\cinit,\cpath,\cfin,\cheadone,\cheadtwo,\cneigh\}$
}
\RuleFunction{
\begin{align*}
\xisix(\cinit,M) &= 
\begin{cases}
(\cneigh,\cheadone)& \mif \cpath \in M  \hfill (\rlabel{1})\\
(\cheadone,\cheadone)& \melif \cheadone \in M \hspace{2.15cm} \hfill (\rlabel{2})\\
(\cheadone,\cstay) & \moth \hfill (\rlabel{3})
\end{cases}\\
\xisix(\cheadone,M) &= 
\begin{cases}
(\cheadone,\cinit)& \mif \cinit \in M \land \cheadone \notin M \hspace{1.5cm}  \hfill (\rlabel{4})\\
(\cheadtwo,\cstay)& \moth \hfill (\rlabel{5})\\ 
\end{cases}\\
\xisix(\cheadtwo,M) &= 
\begin{cases}
(\cheadtwo,\cneigh)& \mif \cneigh \in M \hspace{2.9cm} \hfill  (\rlabel{6})\\
(\cpath,\cheadone)& \melif \cheadone \in M \hfill (\rlabel{7})\\
(\cfin,\cpath) & \melif \cpath \in M \hfill (\rlabel{8})\\
(\cfin,\cstop) & \moth \hfill (\rlabel{9})\\
\end{cases}\\
\xisix(\cpath,M) &=  (\cheadone,\cstay) \hspace{5.55cm} \hfill (\rlabel{10})\\
\xisix(\cneigh,M) & = (\cinit,\cheadtwo)
\hspace{5.65cm} \hfill (\rlabel{11})
\end{align*}
}
\end{algorithm}

Therefore, we modify DFS so that any two nodes $u_i, u_j$ in the path $P=(u_1,u_2,\dots,u_k)$ are adjacent only when $|i-j|=1$. We refer to this modified algorithm as the semi-DFS. As we shall see in Section \ref{sec:six_color_alg}, this property greatly facilitates graph exploration. The pseudocode for the semi-DFS is provided in Algorithm \ref{al:semi-dfs}. The sole difference from the original DFS occurs during path expansion: we extend the path $P=(u_1,u_2,\dots,u_k)$ only if the property is preserved. Specifically, let $U$ denote the set of neighbors of $u_k$ that have not been visited and are not adjacent to any node in the earlier part of the path, \ie
$U(P,F)=\upf$.
If $U\neq \emptyset$, we expand the path; otherwise, we backtrack from $u_k$ to $u_{k-1}$.

\begin{lemma}
\label{lemma:semi-dfs}
For any graph $G=(V,E)$ and for any starting node $s\in V$, the semi-DFS algorithm eventually terminates, and by the time it terminates, every node in $G$ has been visited (i.e., $F=V$ eventually holds).
\end{lemma}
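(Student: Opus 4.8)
The plan is to establish termination and full coverage separately, leaning on a few structural invariants of Algorithm~\ref{al:semi-dfs}. First I would record the basic bookkeeping facts: the path $P$ always consists of distinct nodes (expansion appends a node outside $\{u_1,\dots,u_k\}$, while backtracking only deletes the head), the set $F$ never shrinks, and---crucially---a node leaves $P$ only by being backtracked, at which instant it is inserted into $F$; since expansion never selects a node of $F$, no node is ever backtracked twice. Consequently every node that ever lies on $P$ is backtracked at most once, and (because $P=()$ at termination) exactly those nodes end up in $F$. In particular $s\in F$ at termination, since $s$ lies on $P$ at the start.

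For termination I would exhibit a strictly increasing, bounded integer potential. Take $\Phi = 2|V|\cdot|F| + |P|$. An expansion leaves $F$ unchanged and increases $|P|$ by one, so $\Phi$ grows by $1$; a backtrack increases $|F|$ by one and decreases $|P|$ by one, so $\Phi$ grows by $2|V|-1\ge 1$. Since $0\le|F|\le|V|$ and $0\le|P|\le|V|$, the potential is bounded above by $2|V|^2+|V|$, so only finitely many iterations can occur and the loop halts with $P=()$.

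The substance of the lemma---and the step I expect to be the main obstacle---is showing $F=V$ at termination, because the semi-DFS deliberately refuses to expand into an unvisited neighbor $w$ of the head $u_k$ whenever $w$ is also adjacent to an earlier path node (so $w\in\bigcup_{i<k}N(u_i)$); one therefore cannot simply claim a backtracked node has all its neighbors discovered. To handle this I would attach to each finished node $x$ a \emph{finishing depth} $d(x)$, equal to the length of $P$ (equivalently, the position of $x$ on $P$) at the instant $x$ is backtracked. This is well defined because path nodes are distinct, and because the strict prefix below a node is never altered until that node itself is backtracked, a node occupying position $i$ is backtracked exactly at depth $i$. The key descent claim is: if $w$ is never placed in $F$ and $x\in N(w)\cap F$, then there is another neighbor $x'\in N(w)\cap F$ with $d(x')<d(x)$. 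Indeed, at the moment $x=u_k$ is backtracked we have $U(P,F)=\emptyset$, hence $w\in F\cup\{u_1,\dots,u_k\}\cup\bigcup_{i<k}N(u_i)$; since $w\notin F$ and $w$ never lies on $P$, necessarily $w\in N(u_i)$ for some $i<k$, and $x'=u_i$ is a neighbor of $w$ that is eventually backtracked at depth $i<k=d(x)$.

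Finally I would close by contradiction using connectivity. If $F\neq V$ at termination, then since $s\in F$ and $G$ is connected, some $w\in V\setminus F$ is adjacent to $F$; fix such a $w$ and let $x\in N(w)\cap F$ minimize the finishing depth. The descent claim produces a neighbor of $w$ in $F$ of strictly smaller finishing depth, contradicting minimality. Hence no node of $V\setminus F$ borders $F$, and connectivity forces $F=V$. The two delicate points to verify carefully are the prefix-preservation fact underpinning $d(\cdot)$ and the guarantee that the node $u_i$ identified in the descent claim is genuinely backtracked---both of which follow from termination with an empty path together with the fact that nodes leave $P$ only through $F$.
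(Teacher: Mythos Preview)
Your proof is correct and follows essentially the same approach as the paper. Both bound the number of iterations (you via a potential $2|V|\cdot|F|+|P|$, the paper via a direct $2|V|$ count) and prove full visitation by a minimality argument on the finished side of a crossing edge: the paper picks the most-recently-removed boundary node while you pick the finished neighbor of $w$ of minimal depth, but since finishing depth and removal time are inversely monotone these are the same descent, and the key step---that an unfinished neighbor of the backtracked head $u_k$ must be adjacent to some earlier $u_i$ with $i<k$---is identical.
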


\begin{proof}[Proof of Lemma \ref{lemma:semi-dfs}]
We prove that Algorithm~\ref{al:semi-dfs} terminates and that $F=V$ holds upon termination.
\paragraph{Termination:}
At each iteration of the while loop, either (i) a node is appended to the path $P$, or (ii) a node is removed from $P$ and added to $F$. Since each node can be appended to $P$ at most once and removed from $P$ at most once---and once a node is added to $F$, it is never removed---the total number of iterations is bounded by $2|V|$. Hence, the algorithm terminates after a finite number of iterations.
\paragraph{Full Visitation:}  
Let $X\subseteq V$ be the set $F$ upon termination. Assume, for the sake of contradiction, that $X\subset V$.
Since $G$ is connected, the boundary set $B(X) = \{v \in X \mid \exists u \in V \setminus X: \{u,v\} \in E\}$
is not empty. Let $\vin$ be the node in $B(X)$ that was removed from the return path $P$ most recently, and choose any $\vout \in N(\vin)\setminus X \neq \emptyset$. 
Suppose the return path was $P=(u_1,u_2,\dots,u_k)$ with $\vin=u_k$
at the moment when $\vin$ was removed,
at which $U(P,F) = \upf$ must be empty.
Since $\vout\in N(u_k)$, at least one of the following must hold:
\begin{enumerate}
\item $\vout\in F \subseteq X$,
\item $\vout \in\{u_1,u_2,\dots,u_k\} \subseteq X$, or
\item $\vout$ is adjacent to some $u_i$ with $1\le i\le k-1$.
\end{enumerate}
Cases 1 and 2 contradict $v_{\mathrm{out}}\notin X$. Case 3 yields $u_i\in B(X)$ for some $i<k$, contradicting the choice of $v_{\mathrm{in}}$ as the most recently removed boundary node. Hence $X=V$, as required.
\end{proof}

\begin{figure}[t]
 \centering 
 \includegraphics[width=0.3\textwidth]{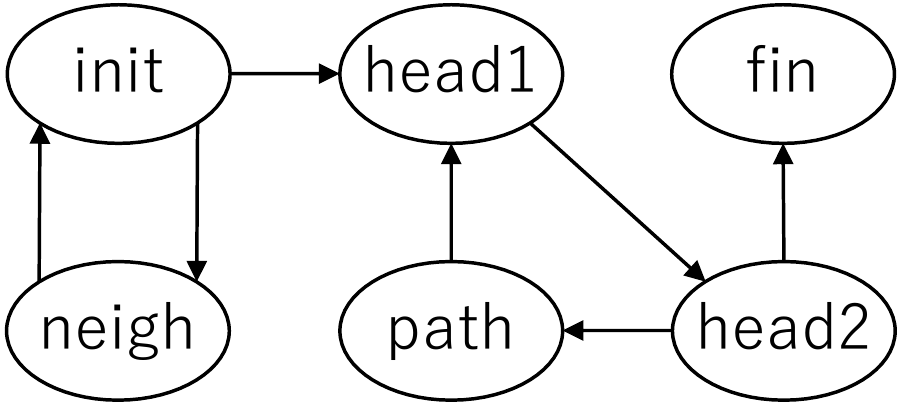}
 \caption{Lifecycle of node colors in algorithm $\asix$} 
 \label{fig:transitions}
\end{figure}

\subsection{How to simulate Semi-DFS with Six Colors}
\label{sec:six_color_alg}
In this section, we present an algorithm $\asix=(\csix,\cinit,\xisix)$ that enables the agent to explore a graph using six colors, namely $\cinit$, $\cpath$, $\cfin$, $\cheadone$, $\cheadtwo$, and $\cneigh$, by simulating the movement of semi-DFS. The definition of $\asix$ is given in Algorithm~\ref{al:asix}. Some readers might be concerned about the absence of a definition for $\xisix(\cfin,M)$ in Algorithm~\ref{al:asix}; however, this poses no issue since, as we shall see later, the agent is located at a node colored $\cfin$ only after the algorithm terminates. \figcap\ref{fig:transitions}, which describes how a node changes its color from $\cinit$ to $\cfin$, may help the reader quickly grasp the algorithm.

Algorithm $\asix$ simulates the Semi-DFS specified by Algorithm~\ref{al:semi-dfs}. In $\asix$, nodes colored $\cfin$ represent finished nodes (\ie the nodes in $F$), while nodes colored $\cpath$ and $\cheadone$ form the path $P$ of the semi-DFS. Nodes colored $\cinit$ indicate that they are neither finished nor on $P$. The colors $\cheadtwo$ and $\cneigh$ are employed to find a node in $U(P,F)=\upf$.
A detailed explanation of $\asix$ is provided in the proof of the following theorem.

\begin{theorem}
\label{theorem:asix}
Algorithm $\asix=(\csix,\cinit,\xisix)$, which uses six colors, solves the exploration problem on any graph $G=(V,E)$.
\end{theorem}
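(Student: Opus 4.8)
The plan is to prove that the agent running $\asix$ faithfully simulates the semi-DFS of Algorithm~\ref{al:semi-dfs}, so that correctness reduces to Lemma~\ref{lemma:semi-dfs}. I would organize the argument around an invariant describing the \emph{resting configurations}, namely those in which the agent sits on a node colored $\cheadone$ while no node is colored $\cheadtwo$ or $\cneigh$. The claim, to be proved by induction on the number of elapsed semi-DFS iterations, is that each resting configuration encodes a semi-DFS state $(P,F)$ with $P=(u_1,\dots,u_k)$ as follows: the nodes of $F$ are exactly the $\cfin$-colored nodes; the body $u_1,\dots,u_{k-1}$ are exactly the $\cpath$-colored nodes; the head $u_k$ is the unique $\cheadone$-colored node and is where the agent stands; and every other node is $\cinit$. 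Alongside this I would carry the defining property of the semi-DFS---that path nodes $u_i,u_j$ are adjacent only when $|i-j|=1$---since this is precisely what keeps the palette constant. The base case is reaching the first resting configuration $P=(s)$ from $\gint(s)$ via \rlabel{3}; the degenerate one-node graph is handled by the chain \rlabel{3}, \rlabel{5}, \rlabel{9}, which halts the agent at $s$.

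Next I would show that one iteration of the while loop of Algorithm~\ref{al:semi-dfs} is realized by a bounded block of transitions carrying one resting configuration to the next. In the \emph{probing phase} the head $u_k$ repeatedly applies \rlabel{4} to walk to some $\cinit$-neighbor $w$ chosen by the adversary; at $w$, \rlabel{1} recolors $w$ to $\cneigh$ and returns to $u_k$ when $w$ is adjacent to the body (so $w\notin U(P,F)$, detected via $\cpath\in M$), whereas \rlabel{2} recolors $w$ to $\cheadone$ and returns when $w\in U(P,F)$. If some neighbor is accepted, $u_k$ detects the second $\cheadone$, switches to $\cheadtwo$ by \rlabel{5}, erases every leftover $\cneigh$ node back to $\cinit$ by repeatedly applying \rlabel{6} and \rlabel{11}, and finally applies \rlabel{7} to pass the head marker to $w$ while demoting $u_k$ to $\cpath$; this is exactly the expansion $P\gets(u_1,\dots,u_k,w)$. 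If no neighbor is accepted, the same cleanup leaves $u_k$ colored $\cheadtwo$ with no $\cheadone$-neighbor, so \rlabel{8} finishes $u_k$ (recoloring it $\cfin$) and moves to its unique $\cpath$-neighbor $u_{k-1}$, which \rlabel{10} promotes to the new head---exactly the backtracking step---while \rlabel{9} handles the terminal case $P=(s)$ by halting the agent at $s$. At each move I would check that the requested color is present and that the destination is unique: uniqueness of the $\cheadone$ target in \rlabel{1}, \rlabel{2}, and \rlabel{7}, of the $\cheadtwo$ target in \rlabel{11}, and of the $\cpath$ target in \rlabel{8} all follow from the invariant together with the semi-DFS adjacency property.

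The hard part will be the bookkeeping forced by the adversary. Only two moves are genuinely non-deterministic---choosing which $\cinit$-neighbor to probe in \rlabel{4} and which $\cneigh$-neighbor to erase in \rlabel{6}---and I must argue that neither can derail the simulation. Each probe strictly decreases the number of $\cinit$-neighbors of $u_k$ (a probed node leaves the color $\cinit$), and each erasure strictly decreases the number of $\cneigh$-neighbors, so every block terminates after at most $\deg(u_k)$ micro-steps; moreover the order in which bad candidates are examined or cleared is immaterial, since they are all restored to $\cinit$. I would also verify that the exceptional halting clause ``$y\notin M$'' never fires spuriously: every move in a block targets a color either guarded by its rule's condition or shown present by the invariant, so the only termination is the intended \rlabel{9}. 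Granting the invariant, Lemma~\ref{lemma:semi-dfs} guarantees that the simulated semi-DFS halts with $F=V$; since a node receives the color $\cfin$ only when the agent stands on it, the agent has by then visited every node, and the final \rlabel{9} is executed at the path root $s$, so the agent terminates at $s$, as required.
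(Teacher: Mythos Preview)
Your proposal is correct and follows essentially the same approach as the paper: your ``resting configurations'' coincide with the paper's ``regular configurations,'' and the block-by-block simulation of one semi-DFS iteration via the probing phase (\rlabels{4, 1, 2}), the switch to $\cheadtwo$ (\rlabel{5}), cleanup (\rlabels{6, 11}), and then expansion (\rlabel{7}) or backtracking/termination (\rlabels{8, 10, 9}) matches the paper's case analysis exactly. Your explicit treatment of the adversarial choices in \rlabels{4} and \rlabel{6}, of the uniqueness of each target color, and of why the ``$y\notin M$'' clause never fires unintentionally is slightly more detailed than the paper's presentation but does not change the argument.
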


\begin{proof}
We fix a starting node $s\in V$ and consider an arbitrary execution $\calE=\gamma_0,\gamma_1,\dots$ of $\asix$ starting at $s$. To prove correctness, it suffices to show that $\calE$ is finite and that every node is colored $\cfin$ in the final configuration since a node is recolored from the initial color $\cinit$ only after it has been visited.

We say that a configuration $\gamma=(v,\psi,b)$ of $\asix$ is \emph{regular} if either 
\begin{enumerate}
\item $b=\tr$, $v=s$, and every node $u$ is colored $\cfin$ (i.e., $\psi(u)=\cfin$), or 
\item $b=\fl$ and there exists a path $P=(u_1,u_2,\dots,u_k)$ with $k \ge 1$ such that $u_1=s$, $u_k=v$, $\psi(u_k)=\cheadone$, $\psi(u_i)=\cpath$ for all $1\le i<k$, and every node not on $P$ is colored either $\cinit$ or $\cfin$.
\end{enumerate}
Note that in any regular configuration, no node is colored $\cneigh$ or $\cheadtwo$. We refer to the regular configuration satisfying the first condition as the \emph{final configuration}. If a configuration $\gamma$ is regular but not final, we denote the unique path defined above by $P(\gamma)$ and call it the \emph{return path} of $\gamma$. Moreover, we denote $F(\gamma)=\{u\in V : \psi(u)=\cfin\}$.

In the initial configuration $\gamma_0$, each node has color $\cinit$, and the agent is located at $s$.
Therefore, by \rlabel{3}, only $s$ is colored $\cheadone$, while all other nodes remain colored $\cinit$, and the agent remains at $s$ in $\gamma_1$. This configuration $\gamma_1$ is regular and corresponds to the initial setting in the semi-DFS, \ie $P(\gamma_0)=(s)$ and $F(\gamma_0)=\emptyset$. In the remainder of this proof, we show that from any regular but non-final configuration $\gamma$, $\asix$ simulates one iteration of the semi-DFS while loop, yielding a new regular configuration. By Lemma~\ref{lemma:semi-dfs}, it follows that $\calE$ eventually reaches a final configuration, thereby proving correctness.

\begin{figure}[tbp]
  \begin{minipage}[b]{0.47\columnwidth}
    \centering
    \includegraphics[width=\columnwidth]{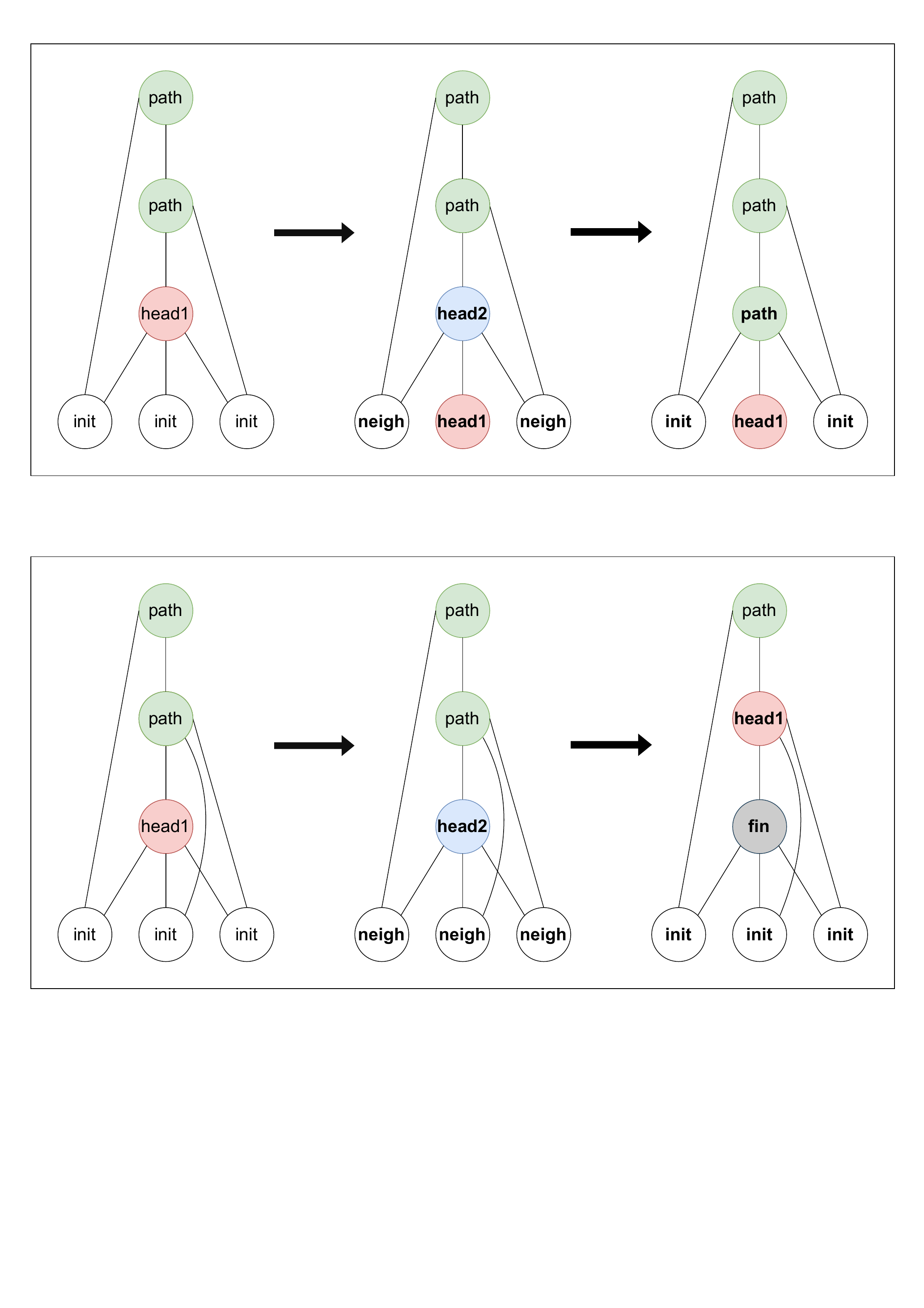}
    \caption{Path expansion}
    \label{fig:expansion}
  \end{minipage}
  \hspace{0.02\columnwidth} 
  \begin{minipage}[b]{0.47\columnwidth}
    \centering
    \includegraphics[width=\columnwidth]{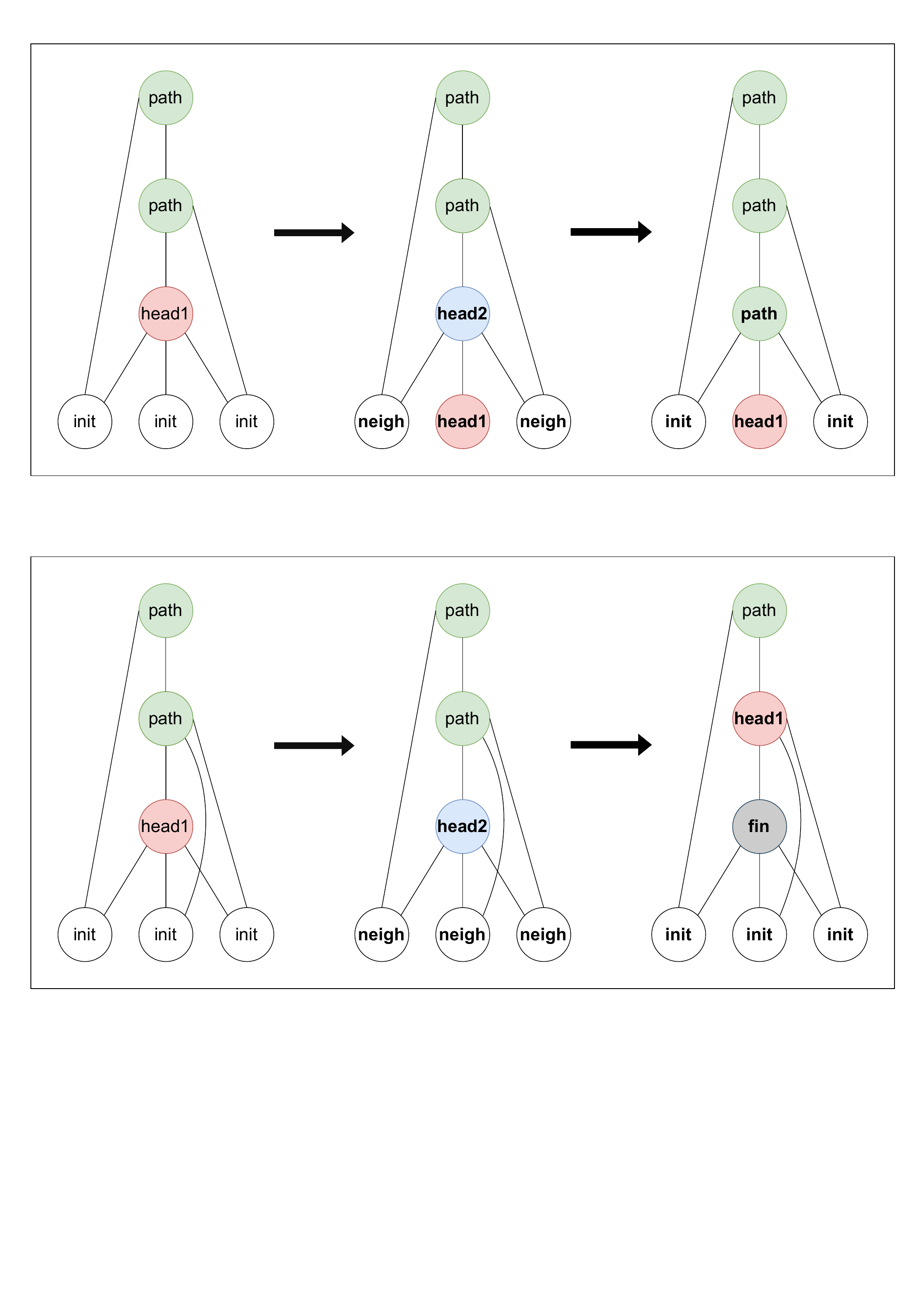}
    \caption{Removal of the head}
    \label{fig:removal}
  \end{minipage}
\end{figure}

Let $\gamma=(v,\psi,b)$ be any regular, non-final configuration. In $\gamma$, the agent is located at the head of the return path $P(\gamma)=(u_1,u_2,\dots,u_k)$; that is, the agent is at node $v=u_k$, which is colored $\cheadone$. Our goal is to verify that the agent simulates one iteration of the while loop of the semi-DFS from this configuration: if there exists a node in $U(P(\gamma),F(\gamma))=\upfg{\gamma}$, the agent appends an arbitrary such node to $P$; if $U(P(\gamma),F(\gamma))=\emptyset$ and $k\ge 2$, the agent removes the head $u_k$ from $P$ and backtracks to $u_{k-1}$, recoloring $u_k$ with $\cfin$; otherwise, the agent terminates.

Let $I(\gamma)=\{ u\in N(u_k) \mid \psi(u)=\cinit\}$.
By \rlabel{4}, the agent visits each node $u\in I(\gamma)$ one by one. Checking whether $u\in U(P(\gamma),F(\gamma))$ is straightforward: since $u$ is colored $\cinit$, it belongs to $U(P(\gamma),F(\gamma))$ if and only if none of its neighbors is colored $\cpath$. The agent colors $u$ with $\cheadone$ if $u\in U(P(\gamma),F(\gamma))$, and with $\cneigh$ otherwise, after which the agent returns to $u_k$ (\rlabels{1 and 2}). We now consider two cases (Figures \ref{fig:expansion} and \ref{fig:removal} may help the reader quickly grasp the two scenarios).

\paragraph{Case 1. $U(P,F)\neq\emptyset$:}  
In this case, the agent visits the nodes in $I(\gamma)$, recoloring them with $\cneigh$, until it encounters a node $u\in U(P,F)$. Then, the agent colors $u$ with $\cheadone$ and returns to $u_k$. At $u_k$, upon observing that $u$ is colored $\cheadone$, the agent recolors $u_k$ with $\cheadtwo$ (\rlabel{5}). By \rlabels{6 and 11}, the agent then visits all neighbors colored $\cneigh$ and reverts their colors back to $\cinit$. Finally, the agent expands $P$ by recoloring $u_k$ with $\cpath$ and moving to $u$ (\rlabel{7}).  

\paragraph{Case 2. $U(P,F)=\emptyset$:}  
In this case, the agent recolors all nodes in $I(\gamma)$ with $\cneigh$, recolors $u_k$ with $\cheadtwo$ (\rlabel{5}), and reverts their colors back to $\cinit$ (\rlabels{6 and 11}). If $k\ge 2$, the agent moves to and recolors $u_{k-1}$ with $\cheadone$ (\rlabels{8 and 10}); otherwise, it terminates (\rlabel{9}). In any case, $u_k$ is recolored with $\cfin$.

In both cases, we have shown that the agent correctly simulates one complete iteration of the semi-DFS while loop, resulting in a new regular configuration.
\end{proof}

\subsection{Triangle-free Case}
When we restrict our attention to triangle-free graphs, \ie graphs that contain no cycle of length $3$, we can reduce the number of colors by one. Specifically, by merging the two colors $\cheadone$ and $\cheadtwo$, our algorithm requires only five colors. This yields the following theorem.
A detailed description of the algorithm and its correctness is provided in the appendix.
\begin{theorem}
\label{theorem:triangle_free}
There exists an algorithm that uses five colors and solves the exploration problem on any graph $G=(V,E)$ containing no cycle of length $3$.
\end{theorem}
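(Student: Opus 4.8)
The plan is to keep the semi-DFS skeleton and its correctness (Lemma~\ref{lemma:semi-dfs}) untouched, and to obtain a five-color algorithm from $\asix$ by collapsing the two head colors $\cheadone,\cheadtwo$ into a single color $\cheadd$, retaining $\cinit,\cpath,\cfin,\cneigh$. Following the proof of Theorem~\ref{theorem:asix} almost verbatim, I would call a configuration \emph{regular} if it is either the final one (all nodes $\cfin$, agent at $s$, terminated) or carries a return path $P=(u_1,\dots,u_k)$ with the head $u_k=v$ colored $\cheadd$, the nodes $u_1,\dots,u_{k-1}$ colored $\cpath$, every off-path node colored $\cinit$ or $\cfin$, and no node colored $\cneigh$. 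The entire argument then reduces, exactly as before, to one claim: from any regular non-final configuration the rule function executes a single iteration of the semi-DFS while loop and reaches a new regular configuration. Termination and full visitation are inherited from Lemma~\ref{lemma:semi-dfs}.

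The scanning sub-phase transfers almost literally: at the head $u_k$ (now $\cheadd$) the agent steps to an $\cinit$ neighbor $w$, tests whether $w$ has a $\cpath$ neighbor, and recolors $w$ with $\cneigh$ if it does (so $w\notin U(P,F)$) or with $\cheadd$ otherwise (marking $w$ as the next head). Because $u_k$ is $\cheadd$ rather than $\cpath$, the edge $\{w,u_k\}$ is correctly excluded and membership in $U(P,F)$ is computed exactly as in $\asix$. The delicate consequence of the merge is that the head color and the ``candidate'' color now coincide, so the two auxiliary return moves---$w$ returning to $u_k$ after being marked, and each $\cneigh$ node returning to $u_k$ during clean-up---both target $\cheadd$ and could a priori be diverted to the candidate. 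This is precisely where triangle-freeness enters: in a triangle-free graph $N(u_k)$ is an \emph{independent set}, so the candidate and every $\cneigh$ node, all of which lie in $N(u_k)$, are pairwise non-adjacent; hence each of them has $u_k$ as its \emph{unique} $\cheadd$-colored neighbor and every return move is forced. I would isolate this ``independent neighborhood'' observation as the single structural lemma of the section, since it is exactly the property the second head color $\cheadtwo$ was previously compensating for.

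The main obstacle is the \emph{phase distinction} that $\cheadtwo$ encoded in $\asix$. There $\cheadone$ reacts to $\cinit$ (it scans) while $\cheadtwo$ ignores $\cinit$ (it only clears $\cneigh$ neighbors and then commits), so the same neighbor multiset is read differently in the two phases. After the merge, a freshly cleared reject (now $\cinit$ again) is indistinguishable from an untested neighbor, and a careless rule would re-scan it forever. To defeat this I would guarantee that a $\cheadd$ \emph{shield} is present among the neighbors of $u_k$ throughout every clean-up, and give the rule function the priority ``if some neighbor is $\cheadd$, clear the remaining $\cneigh$ neighbors and otherwise commit; else if some neighbor is $\cinit$, scan''. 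For a path expansion the shield is automatic: the discovered candidate is already $\cheadd$, so the agent restores all $\cneigh$ nodes to $\cinit$ while the candidate keeps it out of the scanning branch, and only then expands by recoloring $u_k$ with $\cpath$ and stepping onto the candidate.

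The genuinely hard case is backtracking, where $U(P,F)=\emptyset$ and no natural shield exists; this is the step I expect to fight with. My plan is to manufacture a shield by temporarily promoting a neighbor of $u_k$ to $\cheadd$ before clean-up---legitimate because the semi-DFS invariant makes the predecessor $u_{k-1}$ the \emph{unique} $\cpath$ neighbor of $u_k$, so the promotion step is itself unambiguous---and then to separate the expand-commit from the backtrack-commit by a local check of whether the head-candidate genuinely lies in $U(P,F)$ (equivalently, whether it has a $\cpath$ neighbor once $u_k$ is excluded). The subtle points I anticipate are (i) arranging this check so that it is read \emph{before} any recoloring of $u_k$ can corrupt it, and (ii) the root case $k=1$, where there is no predecessor to promote and the agent must instead recolor $s$ with $\cfin$ and terminate; I would verify both by a short case analysis on the boundary configurations, namely a length-one path and the very first step out of $\gint(s)$. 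Once each regular non-final configuration is shown to advance to the next regular one, Lemma~\ref{lemma:semi-dfs} yields a finite execution ending with every node colored $\cfin$ and the agent back at $s$, which is the assertion of Theorem~\ref{theorem:triangle_free}.
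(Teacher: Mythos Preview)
Your proposal is correct and matches the paper's approach almost exactly: merge $\cheadone,\cheadtwo$ into a single $\cheadd$, use triangle-freeness to guarantee that $N(u_k)$ is independent so every return-to-$\cheadd$ move is forced, let the discovered candidate serve as the ``shield'' during the expansion clean-up, and in the backtrack case manufacture the shield by promoting the unique $\cpath$-neighbor $u_{k-1}$ to $\cheadd$ before restoring the $\cneigh$ nodes. The only simplification the paper makes over your sketch is in separating the two commits: rather than revisiting the $\cheadd$-candidate to test whether it lies in $U(P,F)$, the commit is decided locally at $u_k$ by checking whether $\cpath$ is still present among its neighbors---in the expansion case $u_{k-1}$ is still $\cpath$, whereas in the backtrack case the promotion has already removed the sole $\cpath$ neighbor, so the presence or absence of $\cpath$ at $u_k$ after clean-up cleanly distinguishes the two cases.
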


\begin{algorithm}[t]
\caption{Graph Exploration Algorithm for $\varphi$-free Graphs $\afive=(\cfive,\cinit,\xifive)$
(Note: All arithmetic on indices in \(\{0,1,2\}\) is performed modulo 3; the modulo notation is omitted.)
}
\label{al:afive}
\Colors{
$\cfive=\{\cinit,d_0,d_1,d_2,\cfin\}$
}
\Notation{
$f(M) = \min\{i \in \{0,1,2\} \mid d_{i-1} \in M \land d_{i+1} \notin M\}$
}
\RuleFunction{
\begin{align*}
\xifive(\cinit,M) &= 
\begin{cases}
(\cfin,d_0)& \mif \{d_0,d_1,d_2\}\subseteq M  \hspace{3cm} \hfill (\rlabel{1})\\
(d_{f(M)},\cstay)& \melif \exists j \in \{0,1,2\}: d_j \in M \hfill (\rlabel{2})\\
(d_0,\cstay)& \moth \hfill (\rlabel{3})
\end{cases}\\
\xifive(d_i,M) &= 
\begin{cases}
(d_i,\cinit)& \mif \cinit \in M  \hspace{4.4cm} \hfill (\rlabel{4})\\
(d_i,d_{i+1})& \melif d_{i+1} \in M  \hfill (\rlabel{5})\\
(\cfin,d_{i-1})& \melif d_{i-1} \in M \hfill (\rlabel{6})\\
(\cfin,\cstop)& \moth \hfill (\rlabel{7})
\end{cases}\\
& \text{for each }i \in \{0,1,2\}.
\end{align*}
}
\end{algorithm}

\begin{figure}[tbp]
  \begin{minipage}[b]{0.62\columnwidth}
    \centering
    \includegraphics[width=0.8\columnwidth]{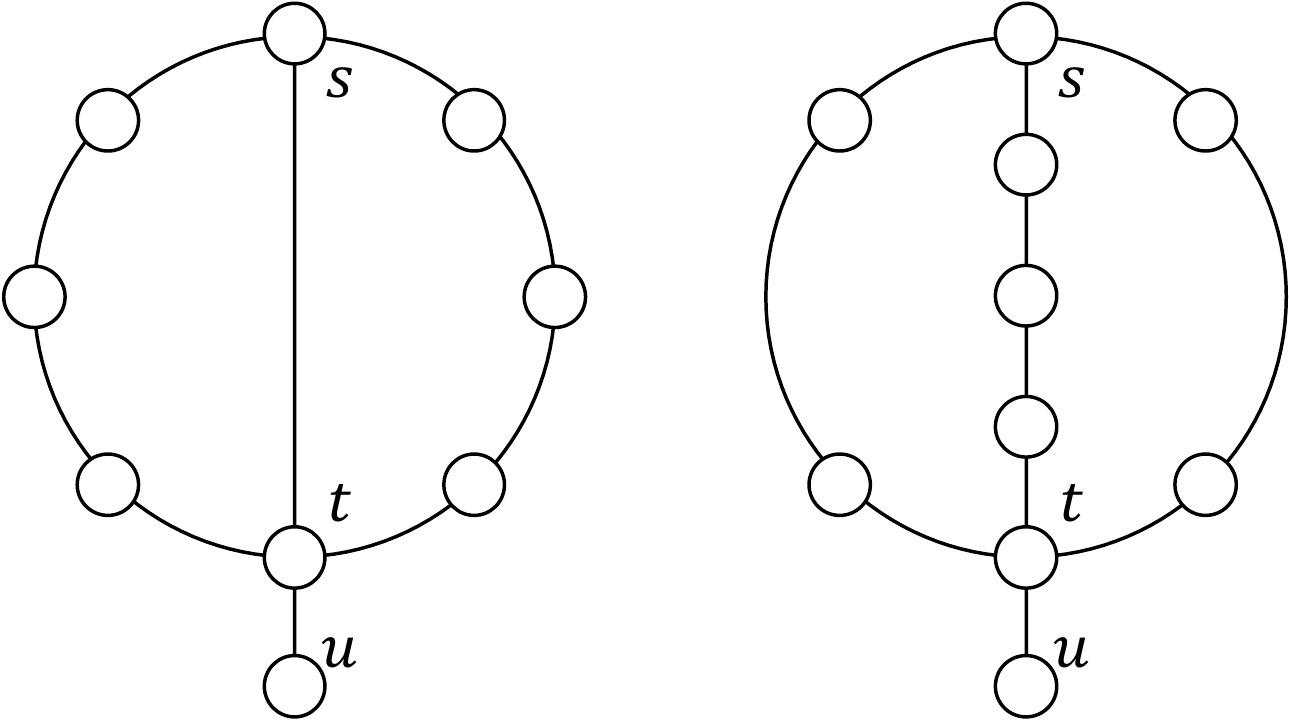}
    \caption{$\varphi$-shaped graphs} 
     \label{fig:phi_graph}    
  \end{minipage}
  \hspace{0.02\columnwidth} 
  \begin{minipage}[b]{0.32\columnwidth}
    \centering
    \includegraphics[width=0.6\columnwidth]{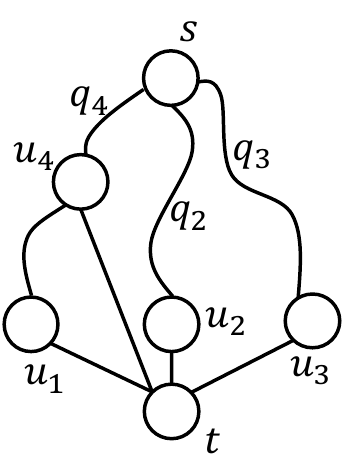}
    \caption{$\varphi$-shaped graph in the proof of Lemma \ref{lemma:phi-free}}
    \label{fig:phi-free_lemma}
  \end{minipage}
\end{figure}

\section{Exploration on $\varphi$-free Graphs}
\label{sec:afive}

In this section, we show that five colors suffice for exploring $\varphi$-free graphs, defined below.
\begin{definition}[$\varphi$-free graph]
A graph $G=(V,E)$ is said to be \emph{$\varphi$-shaped} if there exist three distinct nodes $s,t,u\in V$ such that:
\begin{itemize}
    \item there exist three node-disjoint paths $p_1$, $p_2$, and $p_3$ between $s$ and $t$ that do not contain $u$
    \item every node in $V\setminus\{u\}$ lies on at least one of $p_1$, $p_2$, or $p_3$, and 
    \item $\{t,u\}\in E$,
\end{itemize}
A graph is said to be \emph{$\varphi$-free} if it does not contain any $\varphi$-shaped subgraph.
\end{definition}
Typical examples of $\varphi$-shaped graphs are shown in Figure~\ref{fig:phi_graph}. The term $\varphi$-free is derived from the shape of the $\varphi$-shaped graphs. Every graph with maximum degree at most three is $\varphi$-free, since in any $\varphi$-shaped graph the node $t$ has degree four. Moreover, every cactus is also $\varphi$-free because every $\varphi$-shaped graph violates the cactus condition that no two cycles share more than one node.
The following useful lemma holds.

\begin{lemma}
\label{lemma:phi-free}
A simple, undirected graph $G=(V,E)$ is $\varphi$-free if and only if, for every node $v\in V$ with degree at least four, no connected component of $G-v$ contains three or more neighbors of $v$.
\end{lemma}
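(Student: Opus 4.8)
The plan is to prove both implications by contraposition, viewing ``a connected component of $G-t$ that contains three neighbours of $t$'' as the combinatorial shadow of the three node-disjoint paths in a $\varphi$-shaped subgraph; here $t$ always plays the role of the distinguished degree-four vertex. I would first dispatch the easy direction: suppose $G$ contains a $\varphi$-shaped subgraph $H$ with distinguished nodes $s,t,u$. The three node-disjoint $s$–$t$ paths $p_1,p_2,p_3$ reach $t$ through three last vertices $w_1,w_2,w_3$; since the paths meet only at $s$ and $t$ and a simple graph has at most one $s$–$t$ edge, these $w_i$ are three distinct neighbours of $t$ (if one $p_i$ is the bare edge $st$, then $s$ itself is one of them). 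Together with the edge $\{t,u\}$ this forces $\deg_G(t)\ge \deg_H(t)\ge 4$. Deleting $t$ from the three paths leaves three sub-paths all meeting at $s$, so $w_1,w_2,w_3$ lie in one component of $H-t$, and since the extra edges of $G$ can only merge components they lie in one component of $G-t$. That is exactly a violation of the stated condition.

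The harder direction is the construction. Assuming the condition fails, I fix $v=t$ with $\deg_G(t)\ge 4$ and a component $C$ of $G-t$ containing three neighbours $w_1,w_2,w_3$ of $t$, noting that $t$ has at least one further neighbour. I would fix a spanning tree $T$ of $C$ and use the unique tree median $m$ of the three chosen terminals: the three internally disjoint tree-legs $m\to w_i$, each extended by the edge $w_it$, form three node-disjoint $m$–$t$ paths, so I set $s:=m$. (The degenerate case $m=w_i$, where one leg is trivial and the corresponding path is just the edge $st$, is harmless.) Taking $V_H$ to be the vertices on these paths together with a suitable fourth neighbour $u$, and $E_H$ the path-edges plus $\{t,u\}$, will give a $\varphi$-shaped subgraph, provided $u$ avoids the three paths.

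Exhibiting this fourth neighbour $u$ off the paths is the crux. If $t$ has a neighbour $u\notin C$, it is automatically off the paths (which live inside $C\cup\{t\}$), and any three terminals work. The delicate case is when all $\ge 4$ neighbours of $t$ lie in $C$. Here I would take the Steiner tree $S$ of all these neighbours inside $T$, invoke the standard fact that every leaf of a Steiner tree is a terminal to peel off a leaf-neighbour $w^\ast$, and observe that the Steiner tree $S'$ of the remaining $\ge 3$ neighbours satisfies $w^\ast\notin S'$ (a vertex of $S'$ that is a leaf of $S$ would have to be a terminal of $S'$). Choosing the three path-terminals among the remaining neighbours keeps their tripod inside $S'$, so $u:=w^\ast$ is a neighbour of $t$ missing all three paths, with $u\neq s$ and $u\neq t$.

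I expect the main obstacle to be precisely this last point: guaranteeing a fourth neighbour of $t$ disjoint from the three constructed paths when every neighbour lives inside the single component $C$. The median/tripod construction is forced to consume vertices of $C$, so without the Steiner-tree argument one cannot rule out that every neighbour of $t$ is swept onto the paths. The two facts that make the choice of $u$ safe are that leaves of a Steiner tree are terminals and that Steiner trees are monotone under adding terminals. Everything else—distinctness of $s,t,u$, internal disjointness of the three paths, and the degenerate trivial-leg case—is routine bookkeeping.
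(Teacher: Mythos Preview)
Your proposal is correct. The easy direction (contrapositive of sufficiency) is the same as the paper's, only spelled out in more detail.

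For the harder direction the paper takes a different and somewhat shorter route. Like you, it fixes a spanning tree of the component $C$, takes the median $s$ of three neighbours $u_1,u_2,u_3$ of $t$, and builds the three node-disjoint $s$--$t$ paths $p_1,p_2,p_3$ through the tripod legs. But it does \emph{not} try to pre-select the terminals so as to guarantee a fourth neighbour off the paths. Instead it picks an arbitrary fourth neighbour $u_4$ and does a single case split: if $u_4$ already avoids $p_1,p_2,p_3$, done; if $u_4$ lies on, say, $p_1$, then shorten $p_1$ to the $s$--$u_4$ subpath and add the edge $\{u_4,t\}$ to obtain a new path $p_4$, after which $p_4,p_2,p_3$ are still node-disjoint $s$--$t$ paths and the freed endpoint $u_1$ is a neighbour of $t$ lying on none of them. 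This ``swap'' trick replaces your Steiner-tree leaf-peeling argument entirely. Your approach is a bit more machinery-heavy but has the virtue of producing the correct terminals up front with no case analysis; the paper's approach is a one-line fix-after-the-fact that avoids invoking Steiner trees or monotonicity at all.
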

\begin{proof}
\textbf{Sufficiency:\ }
Every $\varphi$-shaped graph contains a node $t$ with degree at least four such that three neighbors of $t$ lie in the same connected component of $G-t$ (see Figure~\ref{fig:phi_graph}). 

\textbf{Necessity:\ }
Assume that there exists a node $t\in V$ with degree at least four for which three of its neighbors, say $u_1$, $u_2$, and $u_3$, lie in the same connected component of $G-t$. Then, in that connected component of $G-t$, there is a spanning tree containing $u_1$, $u_2$, and $u_3$. Consequently, there exists a node $s$ in this component such that there are three node-disjoint paths $q_1$, $q_2$, and $q_3$ in $G-t$, where each $q_i$ is a path from $s$ to $u_i$. 

Now, by adding the edge $\{t,u_i\}$ to each $q_i$, we obtain three node-disjoint paths $p_1$, $p_2$, and $p_3$ from $s$ to $t$ in $G$. Let $u_4$ be another neighbor of $t$ different from $u_1$, $u_2$, and $u_3$. If none of $p_1$, $p_2$, or $p_3$ contains $u_4$, then $G$ contains a $\varphi$-shaped subgraph, as required. Otherwise, assume without loss of generality that $p_1$ contains $u_4$ (see Figure~\ref{fig:phi-free_lemma}). Let $q_4$ be the subpath of $p_1$ connecting $s$ and $u_4$. Then, the three node-disjoint paths $p_4=q_4+\{u_4,t\}$, $p_2$, and $p_3$ from $s$ to $t$ (with the edge $\{t,u_i\}$ appended as before) and edge $\{t,u_1\}$ form a $\varphi$-shaped subgraph.
Note that this discussion holds even in $s = u_2$ or $s=u_3$.
\end{proof}

As mentioned at the beginning of Section~\ref{sec:asix}, \bname~et al.~\cite{BFUW23} use eight colors to explore arbitrary graphs. Their approach employs a breadth-first search (BFS) strategy, encoding the distance between the starting node $s$ and each node $v$ modulo $3$ as the \emph{color} of $v$. Specifically, they use the color set
$$\{\cinit, \cfin, g_0, g_1, g_2, r_0, r_1, r_2\},$$
where we have renamed these colors to be consistent with our notation.
As you can see, exactly two colors are used to encode each distance (modulo $3$) from the starting node, which they refer to as the green and red colors. This multiplicity (\ie having both a green and a red color for each distance) plays a critical role in computing the correct distance (modulo $3$) from the starting node in the BFS manner. Given these distances, the agent has a \emph{sense of direction}: when the agent is located at a node $v$ with distance $i\in\{0,1,2\}$, the neighbors in $N(v)$ with distance $i-1 \bmod{3}$ are the nodes closer to $s$, while those with distance $i+1 \bmod{3}$ are the nodes further away. This sense of direction greatly facilitates the agent's task of visiting all nodes in the graph.

In this section, we show that the multiplicity of colors (i.e., the separate green and red colors) can be eliminated when we consider only $\varphi$-free graphs. Our algorithm $\afive=(\cfive,\cinit,\xifive)$ uses a color set of size five:
$$
\cfive=\{\cinit,\cfin,d_0,d_1,d_2\}.
$$
The key idea is to abandon computing the exact distance between each node $v$ and the starting node $s$. Instead, we construct a directed acyclic graph (DAG) on the graph $G=(V,E)$ using the three colors $d_0,d_1,d_2$, which we refer to collectively as \emph{distance colors}.
Specifically, given a configuration $\gamma=(v,\psi,b)$, we define a digraph $D(\gamma)=(V,A)$ as follows:
$$
A = \{(u,v) \in V \times V \mid \{u,v\} \in E,\; \exists i \in \{0,1,2\}\text{ such that } \psi(u)=d_i \text{ and } \psi(v)=d_{i+1}\}.
$$
Intuitively, a node $u$ colored $d_i$ has an outgoing arc to a neighbor $v\in N(u)$ if and only if $v$ is colored $d_{i+1}$. Here and throughout, all arithmetic on indices in $\{0,1,2\}$ is performed modulo 3 (the modulo notation is omitted for brevity); for example, $d_{2+1}=d_0$ and $d_{0-1}=d_2$.
When no ambiguity arises, we omit the argument $\gamma$ and simply denote $D(\gamma)$ by $D$.

The definition of $\afive=(\cfive,\cinit,\xifive)$ is provided in Algorithm~\ref{al:afive}. At time step $0$, the agent colors the starting node $s$ with $d_0$ (\rlabel{3}). Thereafter, 
the agent moves to a node with the initial color $\cinit$ only from a node with a distance color by \rlabel{4}.
Upon arriving at a node colored $\cinit$, say $v \in V$, the agent observes the multiset
$M = \bigsqcup_{u\in N(v)} \{c(u)\}$,
where $c(u)$ denotes the color of a neighbor $u \in N(v)$. The agent then colors $v$ based on this multiset $M$ and the following two principles:
\begin{enumerate}
\item The digraph $D$ must remain acyclic (i.e., no cycles are ever created), and
\item Any node $v$ that is colored $d_i$ for some $i\in\{0,1,2\}$ must be reachable from $s$ in $D$.
\end{enumerate}
Specifically, \rlabel{2} enforces these principles: whenever $\{d_0,d_1,d_2\} \nsubseteq M$, the agent assigns $v$ the color $d_{f(M)}$, where $$
f(M)=\min\{i\in\{0,1,2\}\mid d_{i-1}\in M \text{ and } d_{i+1}\notin M\}.
$$
The first principle is preserved because no neighbor of $v$ is colored $d_{f(M)+1}$, and the second principle is maintained because there is at least one neighbor colored $d_{f(M)-1}$. Note that $f(M)$ is uniquely determined here since $M$ contains at least one distance color (recall that the agent has moved to $v$ by \rlabel{4}) and $\{d_0,d_1,d_2\} \nsubseteq M$. If, on the other hand, $\{d_0,d_1,d_2\} \subseteq M$,
then the agent colors $v$ with $\cfin$ and then moves to an arbitrary neighbor with color $d_0$ (\rlabel{1}).

As mentioned above, the starting node $s$ is colored $d_0$. Consequently, in the digraph $D$, any node $v$ with color $d_i$ is exactly $i$ steps away from $s$ (modulo 3). This gives the agent a sense of direction, as in the eight-color algorithm by \bname~et al.~\cite{BFUW23}. Suppose that the agent is at a node $v$ with distance color $d_i$ and none of its neighbors is colored $\cinit$. Then, if there exists a neighbor with color $d_{i+1}$, the agent moves forward to that neighbor (\rlabel{5}); otherwise, it backtracks to a neighbor with color $d_{i-1}$ (\rlabel{6}), at which the agent colors $v$ with $\cfin$. The only exception occurs when the agent is at the starting node $s$ (with color $d_0$) and no neighbor is colored $d_1$; in that case, the agent terminates the execution (\rlabel{7}).

The following lemma guarantees that the two principles mentioned above are always preserved.

\begin{lemma}
\label{lemma:principles}
During any execution of $\afive$ starting at any node $s \in V$, the following two invariants are maintained: (i) the digraph $D$ is acyclic, and (ii) every node assigned a distance color is reachable from the starting node $s$ in $D$.
\end{lemma}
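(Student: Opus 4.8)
The plan is to prove both invariants simultaneously by induction on the length of the execution, i.e., over the configurations $\gamma_0,\gamma_1,\dots$ of $\afive$. The base case is immediate: in $\gamma_0$ every node is colored $\cinit$, so $D$ contains no arc and no distance-colored node, and both (i) and (ii) hold vacuously. For the inductive step I would exploit the fact that $D(\gamma)$ depends only on the coloring $\psi$, and that a single transition $\gamma_t\to\gamma_{t+1}$ recolors only the current node $v_t$. Hence it suffices to track how each rule changes the color of $v_t$, and the rules split into three harmless groups and two interesting ones: \rlabels{4 and 5} keep $v_t$ colored $d_i$ and \rlabel{1} recolors $v_t$ from $\cinit$ to $\cfin$, none of which alters any arc of $D$; \rlabels{2 and 3} \emph{add} a distance color to $v_t$; and \rlabels{6 and 7} \emph{remove} one.

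The heart of the argument is the observation that, in every case that touches $D$, the recolored node is a \emph{sink}. When \rlabel{2} colors $v_t$ with $d_{f(M)}$, the definition $f(M)=\min\{i\mid d_{i-1}\in M\land d_{i+1}\notin M\}$ guarantees $d_{f(M)+1}\notin M$, so $v_t$ acquires no outgoing arc and becomes a sink, while $d_{f(M)-1}\in M$ supplies at least one predecessor $u$ colored $d_{f(M)-1}$. Adding a sink cannot close a cycle, so (i) is preserved; and since $u$ is reachable from $s$ by the inductive hypothesis, the new arc $(u,v_t)$ makes $v_t$ reachable too, while every existing path survives, so (ii) is preserved. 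Symmetrically, when \rlabels{6 or 7} recolor $v_t$ from $d_i$ to $\cfin$, the guard $d_{i+1}\notin M$ shows $v_t$ had no outgoing arc, i.e., it was already a sink; removing a sink preserves acyclicity, and since a sink can never be an interior vertex of a directed path, deleting it leaves every other distance-colored node reachable from $s$. \rlabel{3} only needs to be checked at the very first step, where it colors $s$ with $d_0$ as an isolated node, because for $t\ge 1$ the agent enters a $\cinit$-colored node only via \rlabel{4}, which leaves behind a distance-colored neighbor and thereby forbids \rlabel{3}.

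The main obstacle will be the boundary case in which the node being removed is the root $s$ itself, since if $s$ loses its distance color while other distance-colored nodes survive, (ii) collapses. Here I would use invariant (i) to rule out the dangerous removal: \rlabel{6} fired at $s$ (colored $d_0$) would require a neighbor colored $d_2=d_{0-1}$, but such a neighbor $w$ would be reachable from $s$ by (ii), so the arc $(w,s)$ together with a directed path $s\rightsquigarrow w$ of length at least one would form a cycle, contradicting (i); hence \rlabel{6} never applies at $s$, and every genuine removal has $v_t\neq s$. For \rlabel{7}, the stronger guard $d_{i-1}\notin M$ makes $v_t$ isolated, so by (ii) it can only be $s$ (any other isolated node would be unreachable), and the isolation of $s$ forces $s$ to be the \emph{unique} distance-colored node; its removal therefore empties the distance-colored set and (ii) holds vacuously afterward. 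Assembling these cases closes the induction; the delicate point throughout is this interplay whereby acyclicity constrains which recolorings at $s$ are even possible.
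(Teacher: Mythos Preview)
Your proposal is correct and follows essentially the same inductive case analysis on the rules as the paper's proof. You are more explicit than the paper about two edge cases---that \rlabel{3} can fire only at the initial step (so it cannot violate invariant~(ii)) and what happens when \rlabel{6} or \rlabel{7} fires at $s$ itself---but the underlying structure and the key observation (in every rule that touches $D$, the recolored node is a sink) are the same.
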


\begin{proof}
At the start of exploration, these invariants clearly hold: no node has been assigned a distance color, and consequently $D$ is acyclic. In what follows, we show that none of the rules (i.e., \rlabels{1,2,\dots,7}) violate these invariants.

The first invariant 
could potentially be broken only when the agent colors a node $v$ with a distance color via \rlabels{2} or \rlabel{3}. By the definition of the function $f$, \rlabel{2} never creates a cycle in $D$. Similarly,  \rlabel{3} preserves acyclicity because it is applicable only when $v$ has no neighbor with a distance color, ensuring that no cycle is introduced.

The second invariant
could be compromised only when the agent removes a distance color from a node $v$, as occurs in \rlabels{6} or \rlabel{7}. However, these rules are applied only when no neighbor of $v$ is colored $d_{i+1}$ (if $v$ is colored $d_i$), thus $v$ has no out-going arc in $D$. Hence, the removal does not disconnect any node from $s$ in $D$. Therefore, both invariants are maintained throughout the execution.
\end{proof}

\begin{lemma}
\label{lemma:rule1}
During any execution of $\afive$ on any $\varphi$-free graph, \rlabel{1} is applicable only when the agent is at a node of degree exactly 3.
\end{lemma}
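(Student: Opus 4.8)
The plan is to show that whenever \rlabel{1} fires at a node $v$, that node must have exactly three neighbors, by combining the reachability invariant of Lemma~\ref{lemma:principles} with the structural characterization of $\varphi$-free graphs in Lemma~\ref{lemma:phi-free}. First I would unpack the precondition of \rlabel{1}: the rule applies only to a node $v$ colored $\cinit$ whose neighbor multiset $M$ satisfies $\{d_0,d_1,d_2\}\subseteq M$. Hence $v$ has three distinct neighbors $a,b,c$ colored $d_0,d_1,d_2$, respectively, so $\deg(v)\ge 3$, and it remains only to rule out $\deg(v)\ge 4$.

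A small preliminary step I would record is that $v\ne s$. The starting node is recolored $d_0$ by \rlabel{3} at the first action, and an inspection of the rules shows that a node never reverts from a distance color or from $\cfin$ back to $\cinit$; therefore $s$ is never colored $\cinit$ after time $0$. Since the precondition of \rlabel{1} requires distance colors to already exist, it can fire only at some time $t\ge 1$ at a $\cinit$-colored current node, so that node is not $s$. In particular $s$ is a vertex of $G-v$.

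The crux is to argue that $a,b,c$ lie in a single connected component of $G-v$. Here I would invoke Lemma~\ref{lemma:principles}(ii): each of $a,b,c$, carrying a distance color, is reachable from $s$ in the digraph $D$. Because $v$ is colored $\cinit$, it is neither the tail nor the head of any arc of $D$, so $v$ has no incident arc and cannot be an internal vertex of any directed path of $D$. Consequently each directed path from $s$ to $a$, to $b$, and to $c$ in $D$ projects to an undirected path in $G-v$, witnessing that $a,b,c$ are all connected to $s$ in $G-v$ and hence belong to the same component. Finally I would apply the contrapositive of Lemma~\ref{lemma:phi-free}: if $\deg(v)\ge 4$, then since $G$ is $\varphi$-free, no component of $G-v$ may contain three or more neighbors of $v$, contradicting the fact that $a,b,c$ are three distinct neighbors in one component. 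Therefore $\deg(v)=3$.

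I expect the middle step—placing the three colored neighbors in a common component of $G-v$—to be the main obstacle, since it is the only place where the dynamics of $\afive$ really enter. It rests on two facts that must be stated carefully: that the reachability witnesses from Lemma~\ref{lemma:principles} avoid $v$ (which follows from $v$ being $\cinit$-colored and thus arc-free in $D$), and that $s$ itself survives in $G-v$ (which needs the preliminary observation $v\ne s$). The remaining steps are then routine invocations of the two cited lemmas.
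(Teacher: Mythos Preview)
Your proposal is correct, and it reaches the same contradiction with Lemma~\ref{lemma:phi-free} as the paper does, but via a different mechanism. The paper argues \emph{trajectory-wise}: since \rlabel{1} can fire at $v$ only on the agent's first visit there, the agent has until then been confined to a single connected component of $G-v$; hence (by Lemma~\ref{lemma:phi-free}) at most two neighbors of $v$ can have been visited and colored, contradicting $\{d_0,d_1,d_2\}\subseteq M$. You instead argue \emph{invariant-wise}: you invoke Lemma~\ref{lemma:principles}(ii) to obtain directed $s$-paths in $D$ to each of the three distance-colored neighbors, observe that the $\cinit$-colored $v$ has no arc in $D$ so these paths live in $G-v$, and conclude that the three neighbors share a component of $G-v$. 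Your route reuses an already-proved structural invariant and avoids reasoning about the agent's past movements; the paper's route is slightly more elementary in that it does not need Lemma~\ref{lemma:principles} at all. Both are clean; your preliminary observation that $v\ne s$ (so that $s$ survives in $G-v$) is exactly the care the invariant-based argument needs and is handled correctly.
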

\begin{proof}
Let $v$ be a node that the agent colors with $\cfin$ by applying \rlabel{1} during an execution of $\afive$. For \rlabel{1} to be applicable at $v$, the agent must observe all three distance colors $d_0$, $d_1$, and $d_2$ among the neighbors of $v$. Hence, $v$ must have at least three neighbors.

Assume for contradiction that $v$ has degree at least four. By \rlabels{1,2, and 3}, once the agent visits $v$, it assigns $v$ either a distance color or $\cfin$ and never reverts it back to $\cinit$. Thus, \rlabel{1} is applicable at $v$ only during the first visit to $v$. By Lemma~\ref{lemma:phi-free}, before this first visit, the agent can have visited at most two neighbors of $v$, so that at most two neighbors have been assigned a distance color. This contradicts the requirement that all three distance colors must be observed at $v$.
\end{proof}

\begin{lemma}
\label{lemma:connected}
Let $F$ denote the set of nodes colored $\cfin$. During any execution $\gamma_0,\gamma_1,\dots$ of $\afive$ starting at any node $s\in V$ on any $\varphi$-free graph, the induced subgraph $G[V\setminus F]$ is always connected.
\end{lemma}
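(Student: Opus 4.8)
The plan is to establish the invariant by induction on the steps of the execution. Only three rules ever enlarge $F$---namely \rlabels{1, 6, and 7}, the only rules whose first output component is $\cfin$---so it suffices to show that whenever one of them recolors a node $v$ with $\cfin$, the graph $G[V\setminus F]$ stays connected. Writing $H=G[V\setminus F]$ for the induced subgraph just before $v$ is finished (connected by the inductive hypothesis; the base case $F=\emptyset$ is immediate from the connectivity of $G$), I would reduce the claim to showing that $v$ is not a cut vertex of $H$. By the standard characterization, since $H$ is connected it suffices to verify that all neighbors of $v$ lying in $V\setminus F$ belong to a single connected component of $H-v$; then $H-v$ is connected and the invariant is preserved.

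The key tool is a backbone-connectivity fact extracted from Lemma~\ref{lemma:principles}: because every distance-colored node is reachable from $s$ in $D$, and a directed $s$-to-$w$ path in $D$ consists entirely of distance-colored nodes joined by edges of $G$, the nodes carrying a distance color induce a connected subgraph of $G$ containing $s$ (whenever $s$ is itself distance-colored). Two observations then drive every case. First, in each of \rlabels{1, 6, and 7} the finished node $v$ has \emph{no} $\cinit$ neighbor: \rlabel{1} applies to an $\cinit$ node of degree exactly three whose neighbors carry $d_0,d_1,d_2$ (using Lemma~\ref{lemma:rule1}), whereas \rlabels{6 and 7} require $\cinit\notin M$. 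Hence all neighbors of $v$ inside $V\setminus F$ are distance-colored, and it is precisely these that must be shown to lie in one component of $H-v$. Second, \rlabels{6 and 7} fire only when $d_{i+1}\notin M$, so $v$ is a sink of $D$; deleting a sink keeps every other distance-colored node reachable from $s$, so by Lemma~\ref{lemma:principles} (applied to the configuration after the recoloring) the backbone remains connected once $v$ is removed. Under \rlabel{1}, $v$ was $\cinit$ and so never belonged to $D$, leaving the backbone untouched by its deletion.

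Combining these, the routine cases follow: under \rlabel{1} the three neighbors $d_0,d_1,d_2$ of $v$ all sit in the connected backbone, which does not use $v$; under \rlabels{6 and 7} with $v\neq s$ the neighbors of $v$ in $V\setminus F$ (colored $d_{i-1}$ or $d_i$) again lie in the post-removal backbone, hence in one component. I expect the main obstacle to be the termination case, \rlabel{7} with $v=s$, where deleting $s$ destroys the backbone and the characterization above no longer applies directly. The resolution is that \rlabels{6 and 7} force $v$ to be a sink of $D$, and by invariant (ii) of Lemma~\ref{lemma:principles} every distance-colored node is reachable from $s$; when the sink being deleted is $s$ itself, nothing but $s$ can be reachable from $s$, so $s$ is the only distance-colored node at that instant. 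Consequently every neighbor of $s$---which \rlabel{7} already guarantees to be colored $d_0$ or $\cfin$---must in fact be $\cfin$, making $s$ isolated in $H$; since $H$ is connected, this forces $H=(\{s\},\emptyset)$, so $V\setminus F$ becomes empty after the step and the invariant holds vacuously. (The same sink argument also shows \rlabel{6} can never fire at $s$, so the case $v=s$ arises only through \rlabel{7}.)
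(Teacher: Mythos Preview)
Your proposal is correct and follows essentially the same approach as the paper: both reduce to showing that every non-$\cfin$ neighbor of the node $v$ being finished is distance-colored, and then invoke Lemma~\ref{lemma:principles} to conclude that all such neighbors lie in a single connected (backbone) component of $H-v$. The only difference is that you spell out the boundary case $v=s$ under \rlabel{7} explicitly, whereas the paper absorbs it into the contradiction hypothesis (if $H-v$ is disconnected, then $H\neq\{v\}$).
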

\begin{proof}
Let $F(i)$ be the set $F$ in configuration $\gamma_i$. Assume for contradiction that there exists an integer $i\ge 0$ such that $G[V\setminus F(i)]$ is connected, but $G[V\setminus F(i+1)] = G[V\setminus F(i)] - v$ is disconnected,
where $v$ is the node at which the agent is located in $\gamma_i$.
We consider two cases, depending on the rule that causes $v$ to be colored $\cfin$.

\textbf{Case 1:} The agent applied \rlabel{1} in $\gamma_i$. By Lemma~\ref{lemma:rule1}, $v$ then has exactly three neighbors, and all three must already have been assigned a distance color.

\textbf{Case 2:} The agent applied \rlabel{6} or \rlabel{7} in $\gamma_i$. In this case, \rlabel{4} was not applicable in $\gamma_i$, implying that every neighbor of $v$ in $N(v)\setminus F(i)$ already has a distance color.

In either case, every neighbor of $v$ in $G[V\setminus F(i)]$ has a distance color. By Lemma~\ref{lemma:principles}, all these neighbors remain reachable from the starting node $s$ in the digraph $D$, even after the removal of $v$. Thus, $G[V\setminus F(i)]- v$ must be connected, which contradicts our assumption.
\end{proof}

\begin{theorem}
\label{theorem:afive}
Algorithm $\afive=(\cfive,\cinit,\xifive)$, which uses five colors, solves the exploration problem on any $\varphi$-free graph.
\end{theorem}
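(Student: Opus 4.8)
The plan is to fix a starting node $s$ and an arbitrary execution $\calE=\gamma_0,\gamma_1,\dots$ of $\afive$ starting at $s$, and to establish three facts: that $\calE$ is finite, that the agent halts at $s$, and that every node is colored $\cfin$ in the final configuration. Since a node is recolored away from $\cinit$ only after the agent has stood on it, the latter two facts together yield exploration. Throughout I would lean on the two invariants of Lemma~\ref{lemma:principles} (acyclicity of $D$ and reachability of every distance-colored node from $s$) and on the connectivity of $G[V\setminus F]$ from Lemma~\ref{lemma:connected}.

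For finiteness I would use the monotone potential $\Phi(\gamma)=\sum_{v}\phi(v)$, where $\phi(v)=2$ if $\psi(v)=\cinit$, $\phi(v)=1$ if $v$ carries a distance color, and $\phi(v)=0$ if $\psi(v)=\cfin$. No rule ever recolors a node from $\cfin$ or from a distance color back toward $\cinit$, so $\Phi$ is non-increasing, and it strictly decreases exactly at the recoloring rules \rlabels{1,2,3,6, and 7}. It therefore suffices to bound the number of consecutive non-decreasing steps, namely applications of \rlabel{4} and \rlabel{5}. A \rlabel{4} step moves the agent onto an $\cinit$ node, so the very next step applies one of \rlabels{1,2, or 3} and decreases $\Phi$. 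A maximal run of \rlabel{5} steps leaves all colors unchanged and so takes place in a fixed digraph $D$; since each such step traverses an arc $(v,v')$ of $D$ and $D$ is acyclic (Lemma~\ref{lemma:principles}), the run is a simple directed path of length at most $|V|-1$, after which a different rule (hence a decrease, or termination) occurs. As $0\le\Phi\le 2|V|$, the execution has $O(|V|^2)$ steps and is finite.

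For the termination point I would first note that the only rule that halts is \rlabel{7}, since every moving rule has its target color guaranteed present by its guard and $\cstop$ occurs only in \rlabel{7}. Suppose \rlabel{7} fires at a node $v$ colored $d_i$; by its guard $v$ has no neighbor colored $d_{i-1}$. By Lemma~\ref{lemma:principles}, $v$ is reachable from $s$ in $D$, so if $v\ne s$ then the last arc of an $s$-to-$v$ path in $D$ would exhibit a neighbor of $v$ colored $d_{i-1}$, a contradiction; hence $v=s$. I would also observe that $s$ keeps the color $d_0$ until the very end: it is set to $d_0$ by \rlabel{3} at time $0$, and it can never acquire a neighbor $w$ colored $d_2$, for such a $w$ would give an arc $(w,s)$ while $w$ is reachable from $s$, creating a cycle and contradicting acyclicity; thus \rlabel{6} never applies at $s$, and $s$ is recolored only by the final \rlabel{7}.

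It remains to show that when \rlabel{7} fires at $s$ every other node is already $\cfin$, and this is the main obstacle. I would argue by contradiction: if some non-$\cfin$ node remains, then because $G[V\setminus F]$ is connected (Lemma~\ref{lemma:connected}) and contains $s$, the node $s$ has a non-$\cfin$ neighbor, which by the \rlabel{7} guard at $s$ must be colored $d_0$. The difficulty is that acyclicity and reachability alone do not forbid a non-finished $d_0$-node adjacent to $s$; ruling it out is a liveness statement, namely that every node the agent ever visits is finished before the agent is forced to stop at $s$. The crux is that after coloring a node with a distance color the agent would normally finish it upon backtracking through \rlabel{6}, and the only way a backtrack can be diverted past a visited node is an application of \rlabel{1}, which recolors a freshly discovered node directly to $\cfin$ and jumps to a $d_0$-neighbor. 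I expect the heart of the proof to be showing that such \rlabel{1} jumps never strand a visited node, using that \rlabel{1} fires only at degree-$3$ nodes in a $\varphi$-free graph (Lemma~\ref{lemma:rule1}) together with the connectivity of $G[V\setminus F]$; with that in hand, the non-finished distance-colored nodes reachable from $s$ can be shown to be processed in a depth-first fashion, so that a sink among them (one with no $d_{i+1}$-neighbor) would have triggered \rlabel{6} rather than being left behind, while the absence of residual $\cinit$ nodes follows from connectivity and the \rlabel{7} guard at $s$.
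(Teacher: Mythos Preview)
Your termination argument and your proof that the agent halts at $s$ are correct and essentially the same as the paper's.  The third part, however, contains a misdiagnosis.

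You correctly deduce that if $V\setminus F\neq\{s\}$ in the configuration just before \rlabel{7} fires, then $s$ has a non-$\cfin$ neighbor, and that by the \rlabel{7} guard this neighbor is not colored $\cinit$, $d_1$, or $d_2$, hence it is colored $d_0$.  At this point you assert that ``acyclicity and reachability alone do not forbid a non-finished $d_0$-node adjacent to $s$'' and conclude that a separate liveness argument is required.  This is false: reachability in $D$ \emph{does} forbid it immediately.  You have already observed that $s$ has no $d_1$-neighbor, which means $s$ has no outgoing arc in $D$.  By invariant~(ii) of Lemma~\ref{lemma:principles}, every distance-colored node is reachable from $s$ in $D$; since $s$ has out-degree zero, the only distance-colored node is $s$ itself.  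In particular no neighbor of $s$ carries $d_0$, a contradiction.  Thus $V\setminus F=\{s\}$ just before \rlabel{7}, and $F=V$ afterward.

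So the elaborate plan you sketch---tracking \rlabel{1} jumps, arguing a depth-first processing order, invoking Lemma~\ref{lemma:rule1} again---is unnecessary.  All of that structural work is already packaged inside Lemmas~\ref{lemma:principles} and~\ref{lemma:connected}; once you have them, the final step is a two-line application of invariant~(ii), exactly as the paper does.
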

\begin{proof}
We first show that the agent eventually terminates. Notice that the agent changes the color of each node at most twice. Consequently, the rules that alter a node's color (\rlabel{1,2,3,6,7}) are applied only finitely many times. Moreover, \rlabel{4} is applied only finitely many times since each application of \rlabel{4} eventually triggers an application of one of \rlabel{1,2,3} in the subsequent step. The only remaining rule is \rlabel{5}; however, by Lemma~\ref{lemma:principles}, the digraph $D$ remains acyclic throughout the execution, so the agent cannot apply \rlabel{5} indefinitely. Thus, the agent must eventually terminate.

Now, consider an execution $\gamma_0,\gamma_1,\dots,\gamma_t$ of $\afive$ starting at some node $s\in V$ on a $\varphi$-free graph, where at step $t-1$ the agent applies \rlabel{7} at a node $v$ with distance color $d_i$, thereby coloring $v$ with $\cfin$ and terminating the execution. We claim that $v$ must be the starting node $s$. Otherwise, by Lemma~\ref{lemma:principles}, $v$ would have a neighbor with color $d_{i-1}$, contradicting the condition required to apply \rlabel{7}. Hence, the agent terminates at $s$.

Finally, we show that in the final configuration $\gamma_t$, every node is colored $\cfin$. In $\gamma_t$, no neighbor of $s$ is colored $\cinit$, because the agent does not apply \rlabel{4} in the step preceding termination. Moreover, no node retains a distance color in $\gamma_t$, since any node with a distance color would be reachable from $s$ in $D$, yet $s$ has no outgoing arcs in $D$ in the final configuration. Therefore, all nodes must be colored $\cfin$. By Lemma~\ref{lemma:connected}, the induced subgraph $G[V\setminus F]$ remains connected, which implies that $F=V$ in $\gamma_t$. This completes the proof.
\end{proof}

\section{Model Clarification}
\label{sec:color_count}

As mentioned in Section~\ref{sec:introduction}, there is an ambiguity in the original model of \bname~et al.~\cite{BFUW23}, specifically in Section~1.2 of that paper: it is unclear whether the agent may recolor a node from a non‐initial color back to the initial color. In their description, the rule function---given the current node’s color and the multiset of neighbor colors---outputs a pair $(x,y)$ in the range
$$
\{1,2,\dots,n_c\}\times(\{1,2,\dots,n_c\}\cup\{\cstop\}),
$$
where $n_c$ is the number of colors \emph{excluding} the initial color $0$. Here, $x$ specifies the new color for the current node, and $y$ specifies which color to move to (or $\cstop$ to terminate).
However, this range contains a typo: the second component must include $0$ so that the agent can move to a node still colored with the initial color, and the first component should also include $0$, since Lemma~7 of the arXiv version~\cite{BFUW23_arxiv} considers the agent leaving a node colored $0$ without changing its color. We therefore conclude that the intended output range in \bname~et al.’s model is
$$
\{0,1,\dots,n_c\}\times(\{0,1,\dots,n_c\}\cup\{\cstop\}),
$$
which matches the rule-function range used in this paper. (As noted in the footnote of Section~\ref{sec:pre}, the $\cstay$ option does not extend the model.) Under this corrected interpretation, the agent can indeed recolor a node from a non‐initial color back to the initial color.

\clearpage

\bibliographystyle{plain} 
\bibliography{bfuw}

\clearpage

\appendix

\section{The Omitted Proof for Triangle-free Case}

\begin{algorithm}[t]
\caption{Graph Exploration Algorithm For Triangle-free graphs $\atf=(\ctf,\cinit,\xitf)$}
\label{al:triangle-free}
\Colors{
$C=\{\cinit,\cpath,\cneigh,\cfin,\cheadd\}$ and $c_0 = \cinit$
}
\RuleFunction{
\begin{align*}
\xitf(\cinit,M) &= 
\begin{cases}
(\cpath,\cinit)& \mif |M|>0\land \forall x \in M, \;x = \cinit \hfill (\rlabel{1})\\
(\cneigh,\cheadd)& \melif \cheadd \in M \land \cpath \in M \hfill (\rlabel{2})\\
(\cheadd,\cheadd)& \melif \cheadd \in M \land \cpath \notin M \hfill (\rlabel{3})\\
(\cheadd,\cstay) & \moth\ \hfill (\rlabel{4})\\
\end{cases}\\
\xitf(\cheadd,M) &= 
\begin{cases}
(\cheadd,\cneigh)& \mif \cheadd \in M \land \cneigh \in M \hspace{0.5cm} \hfill (\rlabel{5})\;\;\\
(\cpath,\cheadd)& \melif \cheadd \in M \land \cpath \in M \hfill (\rlabel{6})\;\;\\
(\cfin,\cheadd)& \melif \cheadd \in M \hfill (\rlabel{7})\;\;\\
(\cheadd,\cinit)& \melif \cpath \in M \land \cinit \in M \hfill (\rlabel{8})\;\;\\
(\cheadd,\cpath)& \melif \cpath \in M \hfill (\rlabel{9})\;\;\\
(\cpath,\cinit)& \melif \cinit \in M \hfill (\rlabel{10})\\
(\cfin,\cstop)& \moth \hfill (\rlabel{11})
\end{cases}\\
\xitf(\cpath,M) &=  (\cheadd,\cheadd) \hspace{4.84cm} \hfill (\rlabel{12})\\
\xitf(\cneigh,M) & = (\cinit,\cheadd) \hspace{5.05cm} \hfill (\rlabel{13})\\
\end{align*}
}
\end{algorithm}
\begin{proof}[Proof of Theorem~\ref{theorem:triangle_free}]
    In a manner similar to Section~\ref{sec:six_color_alg}, we define $\atf=(\ctf,\cinit,\xitf)$ and show that $\atf$ simulates semi-DFS on triangle-free graphs.(The definition of $\atf$ is given in Algorithm~\ref{al:triangle-free}.)

    We follow an outline similar to that of Theorem~\ref{sec:six_color_alg}. First, we redefine the notion of \emph{regular}, with the modification that we use the term $\cheadd$ instead of $\cheadone$. Let the return path be $P = (u_1, u_2, \dots, u_k)$. Suppose there exists a node $u \in N(u_k)$ that is colored $\cinit$. We distinguish between two scenarios based on the value of $k$.
    
    If $k = 1$, then the node $u$ necessarily belongs to $U(P, F)$. Hence, the agent visits the $\cinit$-colored node in accordance with $\rlabel{10}$ and subsequently recolors it with $\cheadd$, as prescribed by $\rlabel{4}$.
    
    If $k > 1$, the agent visits a $\cinit$-colored node according to $\rlabel{8}$. Moreover, if the visited node $u$ belongs to $U(P, F)$, it is recolored with $\cheadd$; otherwise, it is colored $\cneigh$. In either case, the agent returns to $u_k$, following $\rlabel{2}$ or $\rlabel{3}$, respectively.
    
    Next, we consider the following two cases:
    
    \paragraph{Case 1. $U(P,F) \neq \emptyset$} 
    Similarly to Theorem~\ref{sec:six_color_alg}, once a node $u \in U(P,F)$ is reached, that node is recolored with $\cheadd$ and the agent returns to $u_k$. Then, following $\rlabels{5 and 13}$, every node in $N(u_k)$ that is colored $\cneigh$ is reset to $\cinit$, with the agent returning to $u_k$ after each reset. Here, the triangle‐free property of the graph is crucial: since any node colored $\cneigh$ has exactly one adjacent node colored $\cheadd$, the return destination is uniquely determined. Finally, once all vertices of $\cneigh$ have been restored to $\cinit$, $\rlabel{6}$ directs the agent to change the color of $u_k$ to $\cpath$ and, by moving to $u$, adds $u$ to $P$, thereby extending the path.
    
    \paragraph{Case 2. $U(P,F) = \emptyset$} 
    If $k = 1$, then every node in $N(u_k)$ is colored $\cfin$, and according to $\rlabel{11}$, the process terminates. In the remaining case where $k \ge 2$, every node in $N(u_k)$ that is not colored $\cpath$ is recolored with $\cneigh$. Moreover, since $k \ge 2$, there is exactly one node adjacent to $u_k$ that is colored $\cpath$. The agent moves to that node, recolors it with $\cheadd$, and returns to $u_k$. Afterwards, every node adjacent to $u_k$ that is colored $\cneigh$ is reset to $\cinit$ (again following $\rlabels{5 and 13}$), and the agent returns to $u_k$. Owing to the triangle‐free property of the graph, the unique $\cheadd$ node to which the agent must return is unambiguously determined. Finally, according to $\rlabel{7}$, the agent changes the color of $u_k$ to $\cfin$ and returns to $u_{k-1}$.
    
    Therefore, it can be shown that Case 1 simulates the expand operation of the semi-DFS, while Case 2 simulates the remove operation.
\end{proof}

\end{document}